\documentclass{cccg17}
\usepackage{graphicx,amssymb,amsmath}
\usepackage{graphicx}
\usepackage{graphics}
\usepackage{epsfig}
\usepackage{epstopdf}
\usepackage{amsmath}
\usepackage{latexsym}
\usepackage{subfig}
\usepackage{wrapfig}
\usepackage{multirow}
\usepackage{amsfonts}
\usepackage{cite}
\usepackage{amssymb}
\usepackage{caption}
\usepackage{algorithm}
\usepackage{algorithmic}

\makeatletter
\newcommand{\rmnum}[1]{\romannumeral #1}
\newcommand{\Rmnum}[1]{\expandafter\@slowromancap\romannumeral #1@}
\makeatother





\title{Balanced $k$-Center Clustering When $k$ Is A Constant}

\author{Hu Ding\thanks{Department of Computer Science and Engineering,
        Michigan State University, {\tt huding@msu.edu}}
        }

\index{Gumble, Barney}
\index{Szyslak, Moe}


\begin{document}
\thispagestyle{empty}
\maketitle

\begin{abstract}
The problem of constrained $k$-center clustering has attracted significant attention in the past decades. In this paper, we study balanced $k$-center cluster where the size of each cluster is constrained by the given lower and upper bounds. The problem is motivated by the applications in processing and analyzing large-scale data in high dimension. We provide a simple nearly linear time $4$-approximation algorithm when the number of clusters $k$ is assumed to be a constant. Comparing with existing method, our algorithm improves the approximation ratio and significantly reduces the time complexity. Moreover, our result can be easily extended to any metric space.

\end{abstract}

 \section{Introduction}
\label{sec-set}
The $k$-center clustering is a fundamental problem in computer science and has numerous applications in real world. Given a set of points in Euclidean space and a positive integer $k$, the problem seeks $k$ balls to cover all the points such that  the maximum radius of the balls is minimized. Another variant of $k$-center clustering considers the case that all the given points (vertices) form a metric graph and the centers of the balls are chosen from the vertices. The optimal approximation results appeared in the 80's: Gonazlez~\cite{G85} and Hochbaum and Shmoys~\cite{HS85} provided a 2-approximation and proved that any approximation ratio $c<2$ would imply $P=NP$. Besides the classic problem, several variants of $k$-center clustering with upper~\cite{BKP,KS00,CHK,ABC,KC14} or lower~\cite{APF,EHR,AS16} bounds on cluster sizes have been extensively studied in recent years. In particular, Ding et al.~\cite{DHH} studied $k$-center clustering with both upper and lower bounded cluster sizes which is also called {\em Balanced $k$-Center Clustering}. Most of existing methods model these constrained $k$-center clustering problems as linear integer programming and design novel rounding algorithms to obtain constant approximations.

Besides the well studied applications in data analysis and facility location, balanced $k$-center clustering is particularly motivated by the arising problems in {\em big data}~\cite{DLP,ABM,BBL}. For example, we need to dispatch data to multiple machines for processing if the data scale is extremely large; at the same time we have to consider the balancedness, because the machines receiving too much data could be the bottleneck of the system and the ones receiving too little data is not sufficiently energy-efficient. 

In this paper, we consider the balanced $k$-center clustering problem in high dimension and assume that $k$ is a constant. The rationale for the assumption is twofold: $k$ is usually not large in practice (e.g., the data is distributed over less than $10$ machines); even if $k$ is large, we can first partition the data into multiple groups and perform balanced $k$-center clustering for each group with a much smaller $k$ (similar to the manner of {\em hierarchical  clustering}~\cite{KR90}).

\noindent\textbf{Our main result.} Given an instance of $k$-center clustering with upper and lower bounds on cluster sizes, we develop a nearly linear time $4$-approximation algorithm. We assume that the dimensionality $d$ is large and the number of clusters $k$ is a constant. The key techniques contains two parts. First, we observe that Gonazlez's algorithm~\cite{G85} could provide a set of candidates for the $k$ cluster centers and at least one candidate yields $4$-approximation (Lemma~\ref{lem-4app}). Secondly, we develop a novel rounding procedure to select the qualified candidate and generate a feasible solution for the balanced $k$-center clustering (Lemma~\ref{lem-integer}); note that a straightforward idea for the selection task is modeling it as a maximum flow problem but the running time would be at least quadratic. Comparing with the existing method for balanced $k$-center clustering~\cite{DHH}, we improve the approximation ratio from $6$ to $4$ and significantly reduce the running time via avoiding to solve the large-scale linear programming. 

Also, our result can be easily extended to any metric space and the running time depends on the complexity for acquiring the distance between any two points (e.g., the complexity is $O(d)$ in Euclidean space).

\noindent\textbf{Notation.} Throughout the paper we denote the input as a set of $n$ points $P$ in $\mathbb{R}^d$ and an integer $k\geq 1$; we further constrain the size of each cluster by the lower and upper bounds $L$ and $U \in \mathbb{Z}^+$ (to ensure that a feasible solution exists, we assume $1\leq L\leq \lfloor\frac{n}{k}\rfloor \leq \lceil\frac{n}{k}\rceil\leq U\leq n$).

%

\section{Our Algorithm}
\label{sec-high}

\subsection{Finding The Candidates For Cluster Centers}
\label{sec-candidate}

Gonazlez's seminal paper~\cite{G85} provided a very simple $2$-approximation algorithm for $k$-center clustering in any dimension. Basically, the algorithm iteratively selects $k$ points from the input, where the initial point is arbitrarily selected, and each following $j$-th step ($2\leq j\leq k$) chooses the point which has the largest minimum distance to the already selected $j-1$ points. Finally, it is able to show that these $k$ points induce a $2$-approximation for $k$-center clustering if each input point is assigned to its nearest neighbor of these $k$ points. 

We denote these $k$ points selected by Gonazlez's algorithm as $S=\{s_1, s_2, \cdots, s_k\}$, and define the set $S^k$ as $\underbrace{S\times\cdots\times S}_k$, i.e., $\{(s'_1, s'_2, \cdots, s'_k)\mid s'_j\in S, 1\leq j\leq k\}$. Then we have the following lemma. 

\begin{lemma}
\label{lem-4app}
There exists a $k$-tuple points from $S^k$ yielding a $4$-approximation for balanced $k$-center clustering. 
\end{lemma}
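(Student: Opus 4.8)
The plan is to exhibit one explicit member of $S^k$ together with a feasible assignment whose maximum radius is at most $4r^*$, where $r^*$ denotes the optimal radius of the given balanced $k$-center instance. First I would fix notation: let $\{O_1,\dots,O_k\}$ be an optimal balanced clustering of $P$, with $c_j$ the center of a ball of radius $r^*$ containing $O_j$, so that $\|q-c_j\|\le r^*$ for all $q\in O_j$ and $L\le|O_j|\le U$ for every $j$. The tuple I will use is obtained by choosing, for each optimal cluster, a nearby Gonzalez point.

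The first real step is to bound Gonzalez's coverage radius in terms of $r^*$. Since every feasible balanced clustering is in particular a feasible (unconstrained) $k$-center clustering, the optimal unconstrained $k$-center radius is at most $r^*$. I would then invoke the standard analysis of the greedy: writing $r_{\mathrm{Gon}}:=\max_{p\in P}\min_{1\le i\le k}\|p-s_i\|$ and letting $s_{k+1}$ be a point attaining this maximum, the monotonicity of the successive max-min values shows that the $k+1$ points $s_1,\dots,s_{k+1}$ are pairwise at distance at least $r_{\mathrm{Gon}}$; by pigeonhole two of them lie in a common optimal cluster $O_j$, and the triangle inequality through $c_j$ gives $r_{\mathrm{Gon}}\le 2r^*$. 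I would present this as a one-line sub-claim.

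Next, for each $j\in\{1,\dots,k\}$ pick an arbitrary representative $p_j\in O_j$ (the cluster is nonempty since $|O_j|\ge L\ge 1$) and let $s'_j\in S$ be a Gonzalez point nearest to $p_j$, so $\|p_j-s'_j\|\le r_{\mathrm{Gon}}\le 2r^*$. The candidate tuple is $(s'_1,\dots,s'_j,\dots,s'_k)\in S^k$; note that $S^k$ permits repetitions, so it does not matter whether the $s'_j$ happen to coincide. For the assignment, send every point of $O_j$ to $s'_j$: this is exactly the partition $\{O_1,\dots,O_k\}$, hence it automatically inherits the size bounds $L\le|O_j|\le U$, and for any $q\in O_j$ the triangle inequality yields $\|q-s'_j\|\le\|q-c_j\|+\|c_j-p_j\|+\|p_j-s'_j\|\le r^*+r^*+2r^*=4r^*$. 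This establishes the lemma; moreover, since only the triangle inequality is used, the same argument holds verbatim in an arbitrary metric space.

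The argument is short, and the only genuinely delicate point is the first step: Gonzalez's guarantee must be applied against the balanced optimum $r^*$ rather than the (possibly smaller) unconstrained optimum, which is precisely why the monotonicity-plus-pigeonhole form of the analysis is needed instead of a black-box ``$2$-approximation''. A secondary point worth making explicit is that the optimal centers $c_j$ need not belong to $P$ (in the Euclidean setting), which is what forces the detour through a representative $p_j\in O_j$ and accounts for the bound being $4r^*$ rather than $3r^*$; in the vertex/metric-graph variant one could take $p_j=c_j$ and obtain $3r^*$, but $4$ suffices and keeps the statement uniform.
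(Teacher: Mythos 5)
Your proof is correct. It uses the same two ingredients as the paper's proof---the Gonzalez farthest-point property yields a coverage radius of at most $2r_{opt}$, after which each optimal cluster is reassigned wholesale, via an arbitrary representative, to a nearby Gonzalez point at the cost of one more application of the triangle inequality---but the execution differs. The paper argues by cases on where the greedy points land: if they hit all $k$ optimal clusters it gets $2r_{opt}$ directly; otherwise it takes the first pair $s_{j_1},s_{j_2}$ sharing an optimal cluster, uses $\|s_{j_1}-s_{j_2}\|\le 2r_{opt}$ to bound the coverage of the clusters $C_j$ with $j\ge j_2$ by the first $j_2-1$ points only, and keeps $s_1,\dots,s_{j_2-1}$ as centers for the first clusters (so those clusters even get radius $2r_{opt}$). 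You instead prove the single uniform bound $r_{\mathrm{Gon}}\le 2r^*$ by the standard $(k+1)$-point pigeonhole argument, and you are right that the delicate point is applying it against the \emph{balanced} optimum; this is legitimate exactly as you say, because the balanced optimum is in particular a partition into $k$ clusters of radius $r^*$. This removes the paper's case analysis and its WLOG relabeling, at the price of a uniform $4r^*$ bound on every cluster rather than $2r_{opt}$ on some of them---immaterial for the lemma. Both constructions produce a tuple in $S^k$ (repetitions allowed, which both proofs rely on) together with an assignment that is exactly the optimal partition, so the balance constraints are inherited automatically, and your closing remarks (validity in any metric space, and why the Euclidean bound is $4$ rather than $3$) are consistent with the paper.
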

\begin{proof}
Suppose the unknown $k$ optimal balanced clusters are $C_1, C_2, \cdots, C_k$, and the optimal radius is $r_{opt}$. If the selected $k$ points of $S$ luckily fall to these $k$ clusters separately, it is easy to obtain a $2$-approximation via triangle inequality (we will discuss that how to assign the input points to the $k$ cluster centers for satisfying the requirement of balance in Section~\ref{sec-feasible}). 

Now, we consider the other case.  Without loss of generality, we assume that $s_{j_1}$ and $s_{j_2}$ is the firstly appeared pair belonging to a same optimal cluster and $j_1<j_2$. For the sake of simplicity, we assume that $s_j\in C_j$ for $1\leq j\leq j_2-1$. Due to the nature of Gonazlez's algorithm, we know that 
\begin{eqnarray}
\max_{p\in \cup^k_{j=j_2}C_j}\{\min_{1\leq l\leq j_2-1}||p-s_l||\}\leq ||s_{j_1}-s_{j_2}||\leq 2r_{opt}.\label{for-4app1}
\end{eqnarray}
Note even for the points from a same cluster $C_j$ where $j\geq j_2$, their nearest neighbors from $\{s_1, \cdots, s_{j_2-1}\}$ are not necessarily same. Moreover, because of the requirement of balance, we cannot simply assign them to their nearest neighbors to generate a $2$-approximation by (\ref{for-4app1}); actually, this is also the major difference between ordinary and balanced $k$-center clustering. 
Instead, for each $j\geq j_2$ we arbitrarily select a point $p\in C_j$, and assign the whole $C_j$ to $p$'s nearest neighbor of $\{s_1, \cdots, s_{j_2-1}\}$ which is denoted as $s_{l(j)}$. Correspondingly, for any $p'\in C_j$ we have
\begin{eqnarray}
||p'-s_{l(j)}||\leq ||p'-p||+||p-s_{l(j)}||\leq 4r_{opt} \label{for-4app2}
\end{eqnarray}
due to triangle inequality and the fact that both $||p'-p||$ and $||p-s_{l(j)}||$ are no larger than $2r_{opt}$. Thus, the $k$-tuple points $\{s_1,$ $ s_2,$ $ \cdots,$ $ s_{j_2-1},$ $ s_{l(j_2)},$ $ \cdots, s_{l(k)}\}$ yields a $4$-approximation if each optimal cluster $C_j$ is assigned to the $j$-th point in the tuple for $1\leq j\leq k$. 
\end{proof}

\subsection{Finding A Feasible Solution}
\label{sec-feasible}

Next, we answer the question that how to assign the input points to a fixed $k$-tuple points to satisfy the requirement of balance. To show its generalization, we denote the given $k$-tuple as $\{q_1, q_2, \cdots, q_k\}$ which is not necessarily from $S^k$. It is easy to know that the qualified radii must come from the $kn$ distances $\{||p-q_j||\mid p\in P, 1\leq j\leq k\}$. As a consequence, we can apply binary search to find the smallest qualified radius. For each candidate radius $r$, we draw $k$ balls centered at the $k$-tuple points and with the radius $r$ respectively. We denote the $k$ balls as $\mathcal{B}_1, \cdots, \mathcal{B}_k$. Thus, the only remaining problem is determining that whether there exists a balanced clustering on $P$ to be covered by such $k$ balls. We call such a clustering as a {\em feasible solution} if it exists. 

A straightforward way to find a feasible solution is building a bipartite graph between the $n$ points of $P$ and the $k$ balls, where a point is connected to a ball if it is covered by the ball; each ball has a capacity $U$ and demand $L$, and the maximum flow from the points to balls is $n$ if and only if a feasible solution exists. The existing maximum flow algorithms, such as Ford-Fulkerson algorithm or the new Orlin's algorithm~\cite{O13}, costs at least $O(n^2)$ time. Recall that $k$ is constant, and below we will show that the problem can be solved by \textbf{a system of linear equations and inequalities (SoL)} with the size independent of $n$. 

The region $\cup^k_{j=1}\mathcal{B}_j$ divides the space into $2^k-1$ parts (we ignore the region outside the union of the balls, since no point locates there; otherwise, we can simply reject this candidate $r$).
We use $\mathcal{R}_{(j_1, j_2, \cdots, j_t)}$ with $1\leq j_1<j_2<\cdots<j_t\leq k$ to indicate the region 
$$(\mathcal{B}_{j_1}\cap \cdots\cap \mathcal{B}_{j_t})\setminus(\cup_{j\notin\{j_1, \cdots, j_t\}}\mathcal{B}_j).$$
We calculate the total number of points covered by $\mathcal{R}_{(j_1, j_2, \cdots, j_t)}$ which is denoted as $n_{(j_1, j_2, \cdots, j_t)}$, and assign $t$ non-negative variables $x^{j_1}_{(j_1, j_2, \cdots, j_t)}, \cdots, x^{j_t}_{(j_1, j_2, \cdots, j_t)}$ where each $x^{j_l}_{(j_1, j_2, \cdots, j_t)}$ indicates the number of points assigned to the $j_l$-th cluster from $\mathcal{R}_{(j_1, j_2, \cdots, j_t)}$. Thus, we have the following two types of linear constraint.
\begin{eqnarray}
x^{j_1}_{(j_1, j_2, \cdots, j_t)}+ \cdots +x^{j_t}_{(j_1, j_2, \cdots, j_t)}=n_{(j_1, j_2, \cdots, j_t)}, \label{for-lp1}\\
L\leq \sum_{(j_1, j_2, \cdots, j_t)\in \pi_{j_l}} x^{j_l}_{(j_1, j_2, \cdots, j_t)}\leq U. \label{for-lp2}
\end{eqnarray}
Here $\pi_{j_l}$ is the set of all the possible subsets containing $j_l$ of $\{1, \cdots, k\}$. There are at most $k2^{k}$ variables and $O(2^k)$ linear constraints in the whole SoL. Since $k$ is a constant, the time complexity for building such a SoL is $O(nd)$ which is dominated by computing the distances between the $n$ points and $k$ ball centers. Further, the time complexity for solving the SoL is $O(poly (2^k))$ via Gaussian elimination.

Once obtaining a feasible solution of the above SoL, we still need to check that whether the solution is an integer solution for generating a clustering result. 

\begin{lemma}
\label{lem-integer}
If there exists a feasible solution of the above SoL, we can always transform it to an integer solution in $O(poly (2^k))$ time. 
\end{lemma}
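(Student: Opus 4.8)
The plan is to show that the polyhedron $\mathcal{P}$ cut out by the SoL — the constraints (\ref{for-lp1}), (\ref{for-lp2}) and $x\ge 0$ — is an \emph{integral} polyhedron (all of its vertices have integer coordinates), and then to argue that from the given feasible point one can reach such a vertex in $O(poly(2^k))$ time. First I would recast the SoL combinatorially. Build a bipartite graph $H$ whose one side consists of the (at most $2^k-1$) nonempty regions $\mathcal{R}_{(j_1,\dots,j_t)}$ and whose other side consists of the $k$ clusters, and put an edge between region $\mathcal{R}_{(j_1,\dots,j_t)}$ and cluster $j_l$ for each variable $x^{j_l}_{(j_1,\dots,j_t)}$. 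Under this identification, (\ref{for-lp1}) prescribes that the edges incident to each region-node carry an integral total $n_{(j_1,\dots,j_t)}$, while (\ref{for-lp2}) requires that the edges incident to each cluster-node carry a total lying in the integral interval $[L,U]$; this is exactly a transportation/$b$-matching type feasibility problem on $H$.

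The key step is the observation that the $\{0,1\}$-matrix $A$ obtained by stacking the equalities (\ref{for-lp1}) (one row per region) on top of the cluster-sum expressions appearing in (\ref{for-lp2}) (one row per cluster) is precisely the incidence matrix of the bipartite graph $H$: every variable-column has a single $1$ in the row of the region it belongs to and a single $1$ in the row of the cluster it is assigned to. Such a matrix is totally unimodular, and total unimodularity is preserved when we append the unit columns of the slack variables that turn the two-sided bounds $L\le\cdot\le U$ into equalities with integral bounds on the slacks. Writing $\mathcal{P}$ in the standard form $\{x:Ax=b,\ x\ge 0\}$ (together with the integral slack bounds) with $A$ totally unimodular and $b$ integral, and noting that $\mathcal{P}$ is bounded because (\ref{for-lp1}) forces $0\le x^{j_l}_{(j_1,\dots,j_t)}\le n_{(j_1,\dots,j_t)}$, we conclude by Cramer's rule (each vertex solves a nonsingular subsystem whose matrix has determinant $\pm1$) that every vertex of $\mathcal{P}$ is integral. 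I expect pinning down this incidence-matrix structure — in particular checking that the one-sided bounds $L,U$ and the non-negativity constraints do not spoil total unimodularity — to be the main (though standard) technical point.

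It then remains to convert the given feasible point $x^{*}\in\mathcal{P}$, which may be fractional, into a vertex. I would use the standard ``walk to a vertex'' procedure: maintain a feasible point $x$; if the constraints tight at $x$ do not already determine it (their coefficient matrix is not of full column rank), pick a nonzero vector $y$ in their null space and move $x$ along $x\pm ty$ until some new constraint becomes tight — this happens after a finite step because $\mathcal{P}$ is bounded, and it strictly increases the rank of the tight constraints, so after at most $k2^{k}$ iterations $x$ is a vertex and hence integral. Each iteration is a Gaussian-elimination/ratio-test computation on a system of size $O(poly(2^{k}))$, so the whole rounding runs in $O(poly(2^{k}))$ time, as claimed. (Alternatively one can obtain the integer solution by a purely combinatorial cycle-canceling argument on $H$ — repeatedly push $\pm\epsilon$ around an even cycle, or along a path between two leaves, of the fractional subgraph, which leaves every equality in (\ref{for-lp1}) and every cluster-sum in (\ref{for-lp2}) unchanged — but this needs a somewhat delicate case analysis at path endpoints where only the interval bound $[L,U]$ is active, whereas the vertex-walking argument is immediate once total unimodularity is in hand.)
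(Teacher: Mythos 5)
Your proposal is correct, but it proves the lemma by a genuinely different route than the paper. You recast the SoL as a transportation-type problem on a bipartite graph $H$ between the $O(2^k)$ regions and the $k$ clusters, note that the constraint matrix is the incidence matrix of $H$ and hence totally unimodular (and that the two-sided integral bounds $L\le\cdot\le U$, the integral right-hand sides $n_{(j_1,\dots,j_t)}$, and nonnegativity do not spoil integrality, by the standard Hoffman--Kruskal/Cramer argument), so every vertex of the bounded feasible polyhedron is integral; you then reach a vertex from the given fractional point by the standard rank-increasing ``walk to a vertex'' procedure, which takes $O(poly(2^k))$ iterations of $O(poly(2^k))$-size linear algebra. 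The paper instead gives a self-contained combinatorial rounding: it builds a colored multigraph on the $k$ ball-vertices with one edge per pair of fractional variables sharing a region, and repeatedly pushes $\pm\delta$ around multi-colored cycles, leaf-to-leaf paths in the resulting forest, or single-colored cliques (three explicit cases), using the integrality of $L$, $U$ and $n_{(j_1,\dots,j_t)}$ to argue that constraints (\ref{for-lp1}) and (\ref{for-lp2}) are never violated and that the number of fractional entries strictly decreases. Your parenthetical remark is apt: the paper's argument is essentially the cycle/path-canceling alternative you mention, including the delicate point at path endpoints where only the slack in (\ref{for-lp2}) saves the day, and this is exactly what total unimodularity lets you bypass. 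What your approach buys is brevity and reliance on standard polyhedral machinery; what the paper's buys is an elementary, implementation-ready procedure that never appeals to LP theory. Both achieve the claimed $O(poly(2^k))$ bound.
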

\begin{proof}

\begin{figure}[h]
\vspace{-.2cm} 
\centering
 \includegraphics[height=1.1in]{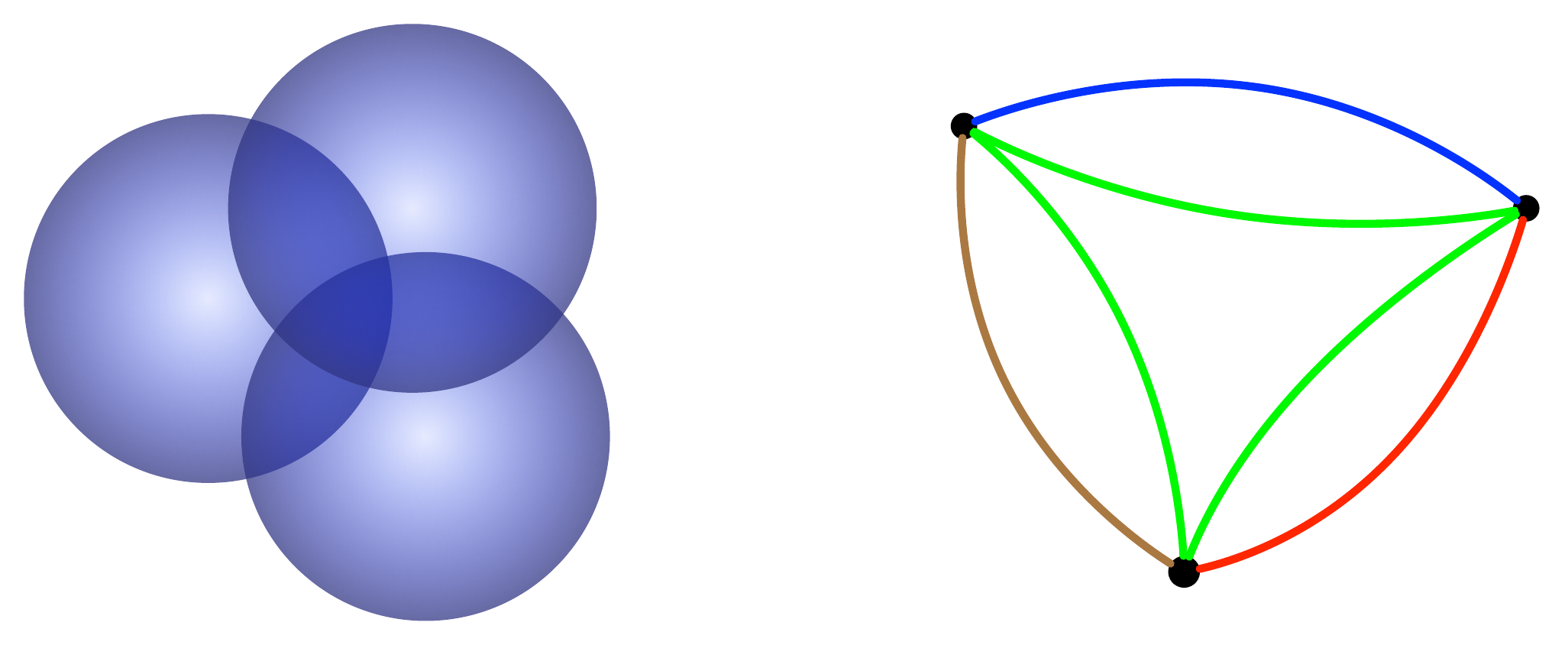}
  \vspace{-0.1in}
  \caption{An illustration for building the multigraph $G$. Suppose $k=3$ and the three balls locate as the left figure shows. For the sake of simplicity, we assume that all the variables corresponding to the overlapping areas are fractional. The colored multigraph $G$ is in the right. The green edges correspond to the intersection of the three balls; any two vertices have another individually colored edge corresponding to their own intersection.} 
   \label{fig-graph}
  \vspace{-0.15in}
\end{figure}

Suppose we have a fractional feasible solution denoted as $\Gamma=\{x^{j_l}_{(j_1, j_2, \cdots, j_t)}\mid j_l\in (j_1, j_2, \cdots, j_t),$ $ (j_1, j_2, \cdots, j_t)\in \pi\}$ where $\pi=2^{\{1, \cdots, k\}}$. To help our analysis, we also construct a colored multigraph $G(V,E)$, where $V$ contains $k$ vertices $\{v_j\mid 1\leq j\leq k\}$ corresponding to the $k$ balls $\{\mathcal{B}_j\mid 1\leq j\leq k\}$ respectively. Moreover, for any region $\mathcal{R}_{(j_1, j_2, \cdots, j_t)}$ and any pair $j_l, j_{l'}$ with $1\leq l<l'\leq t$, we add an edge between $v_{j_l}$ and $v_{j_{l'}}$ if both $x^{j_l}_{(j_1, j_2, \cdots, j_t)}$ and $x^{j_{l'}}_{(j_1, j_2, \cdots, j_t)}$ are fractional values. Thus, it is possible to have multiple edges between two vertices. Also, the edges corresponding to each $\mathcal{R}_{(j_1, j_2, \cdots, j_t)}$ share the same color (see Fig.~\ref{fig-graph}). Consider the following three cases.\\

\noindent\textbf{Case \Rmnum{1}.} If there is a circle having at least two different colors in $G$, we denote it as $v_{1}\rightarrow v_2\rightarrow\cdots\rightarrow v_h\rightarrow v_1$ w.l.o.g. From the construction of $G$, we know that for any two neighborhoods in the circle, there are two corresponding numbers from $\Gamma$ which are both fractional and share the same region. Let the couples of numbers be 
\begin{eqnarray}
(x^1_{*_1},x^2_{*_1}), (x^2_{*_2},x^3_{*_2}), \cdots, (x^h_{*_h},x^1_{*_h}).
\end{eqnarray}
Here we denote the foot subscripts by $*_j$ to simplify our analysis. If there exist two consecutive edges sharing the same color, e.g., $\overline{v_1 v_2}$ and $\overline{v_2 v_3}$, from the construction of $G$ we know that $v_1$ and $v_3$ are connected by an edge with the same color as well. Hence we can always delete $v_2$ from the circle and add the edge $\overline{v_1 v_3}$. Therefore we can assume that any neighbor edges have different colors in the circle, i.e., the following claim.\\

\noindent\textbf{Claim.}
{\em $*_{j-1}\neq *_j$ for $2\leq j\leq h$ and $*_h\neq *_1$. }\\

Meanwhile, we choose the small positive value 
\begin{eqnarray}
\delta=\min\{x^j_{*_j}-\lfloor x^j_{*_j}\rfloor, \lceil x^j_{*_{j-1}}\rceil -x^j_{*_{j-1}}\mid 1\leq j\leq h\}
\end{eqnarray}
where $x^1_{*_0}$ represents $x^1_{*_h}$ for convenience. Together with the above claim we know that the following numbers
\small{
\begin{eqnarray}
x^1_{*_1}-\delta,x^2_{*_1}+\delta, x^2_{*_2}-\delta,x^3_{*_2}+\delta, \cdots, x^h_{*_h}-\delta,x^1_{*_h}+\delta
\end{eqnarray}
}
\normalsize
contain at least one integer and all the others remain non-negative (see Fig.~\ref{fig-circle}). More importantly, no constraint of the SoL is violated after this adjustment. Since this operation adds new integers to $\Gamma$, we have to remove some edges of $G$ due to the rule of its construction. If we keep adjusting the fractional values of $\Gamma$ by this way, the edges of $G$ will become fewer and fewer. After finite steps, there will be no circle or each circle has only one color, i.e., one of the next two cases happens. Actually the following two cases can be handled by similar manners. In order to show our idea more clearly, we discuss the simpler one, Case~\Rmnum{2}, first. \\

\begin{figure}[h]
\vspace{-.1cm} 
\centering
 \includegraphics[height=0.7in]{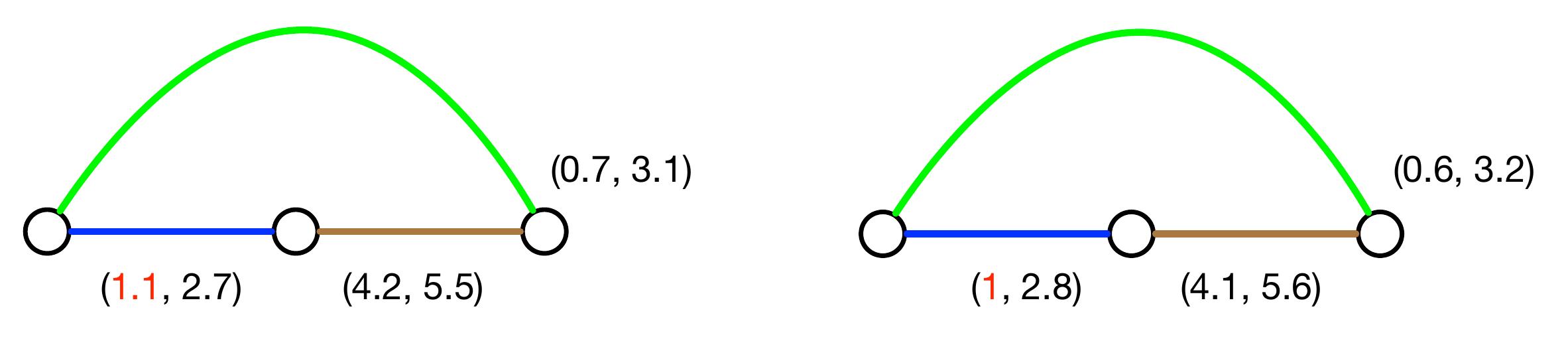}
  \vspace{-0.1in}
  \caption{An illustration for adjusting the fractional numbers for Case \Rmnum{1}. The left shows an example circle with $h=3$ and the couples of numbers. The right shows the couples of numbers after the adjustment with $\delta=0.1$.} 
   \label{fig-circle}
  \vspace{-0.15in}
\end{figure}

\noindent\textbf{Case \Rmnum{2}.} Now, we consider the second case that no circle exists in $G$; in other words, $G$ is a forest. Different to the first case, we arbitrarily pick a leaf-to-leaf path in $G$ and denote it as $v_{1}\rightarrow v_2\rightarrow\cdots\rightarrow v_h$, i.e., $v_1$ and $v_h$ are two leaves in $G$ (see Fig.~\ref{fig-tree}). Also from the construction of $G$, we have the following couples of fractional values
\begin{eqnarray}
(x^1_{*_1},x^2_{*_1}), (x^2_{*_2},x^3_{*_2}), \cdots, (x^{h-1}_{*_{h-1}},x^h_{*_{h-1}}).
\end{eqnarray}
Moreover, it is easy to know that $*_j\neq *_{j+1}$ for $1\leq j\leq h-2$; otherwise, there will be a circle $v_j\rightarrow v_{j+1}\rightarrow v_{j+2}\rightarrow v_{j}$ due to the construction of $G$ (which is contradict to the definition of Case \Rmnum{2}). Because $v_1$ is a leaf, we know that only one number of $\{x^1_{*}\mid *\in \pi_1\}$ is fractional and thus $\sum_{*\in \pi}x^1_*$ is fractional. Note that both $L$ and $U$ are integers, so the constraint (\ref{for-lp2}) is not tight in both sides for $j_l=1$, and similarly for $j_l=h$ too. We choose $\delta=\min\{x^j_{*_j}-\lfloor x^j_{*_j}\rfloor, \lceil x^{j+1}_{*_{j}}\rceil -x^{j+1}_{*_{j}}\mid 1\leq j\leq h-1\}$. Through the same manner for analyzing the first case, we know that the following numbers 
\begin{eqnarray}
x^1_{*_1}-\delta,x^2_{*_1}+\delta, x^2_{*_2}-\delta,x^3_{*_2}+\delta, \nonumber\\
\cdots, x^{h-1}_{*_{h-1}}-\delta,x^h_{*_{h-1}}+\delta
\end{eqnarray}
contain at least one integer and all the others remain non-negative, while no constraint of the SoL is violated. In particular, the constraint (\ref{for-lp2}) for $j_l=1$ and $h$ still holds since they are not tight before. Then we update $G$ by removing some edges. If we keep performing this adjustment finite times, $G$ will contain no edge. That is, we obtain an integer solution for the SoL.\\

\begin{figure}[h]
\vspace{-.1cm} 
\centering
 \includegraphics[height=0.9in]{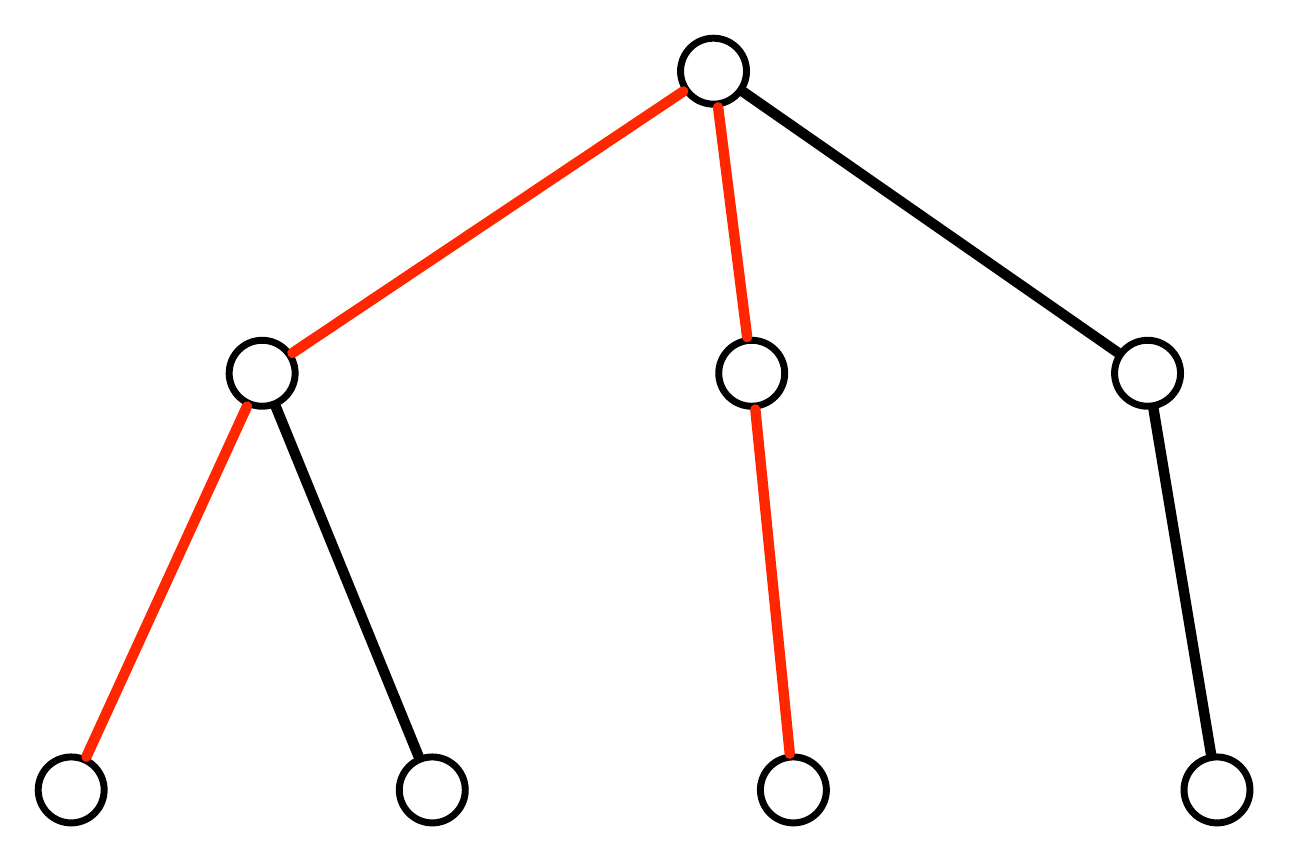}
  \vspace{-0.1in}
  \caption{The red edges indicate a leaf-to-leaf path in the tree. The original edge colors are omitted here for the sake of simplicity.} 
   \label{fig-tree}
  \vspace{-0.15in}
\end{figure}

\noindent\textbf{Case \Rmnum{3}.} The third case is that $G$ only contains the circles having single color. We will show that this case can be handled by a similar way of Case \Rmnum{2}. First, we know that there are the following two different types of vertices in $G$. \textbf{Type $\rmnum{1}$}: the vertex not belonging to any circle; \textbf{Type $\rmnum{2}$}: the vertex belonging to some circle. Due to the construction of $G$ we know that the vertices belonging to a circle actually form a clique, and all of them are type $\rmnum{2}$. So we build a pseudo tree for $G$ recursively as follows. \\

\noindent\textbf{Pseudo-tree(G)}
\begin{enumerate}
\item Initially, pick a vertex $v$ arbitrarily from $G$. 
\item If $v$ is type $\rmnum{1}$, take it as the root. Else, take the whole clique $\mathcal{C}$ containing $v$ as the root.
\item Delete $v$ (if type $\rmnum{1}$) or $\mathcal{C}$ (if type $\rmnum{2}$) and its induced edges. If the remaining of $G$ is not empty, it will become a set of disjoint components $\{G_1, \cdots, G_t\}$.
\item For each component $G_i$, add Pseudo-tree($G_i$) as a child of $v$ or $\mathcal{C}$. 
\end{enumerate}

\begin{figure}[h]
\vspace{-.1cm} 
\centering
 \includegraphics[height=1.3in]{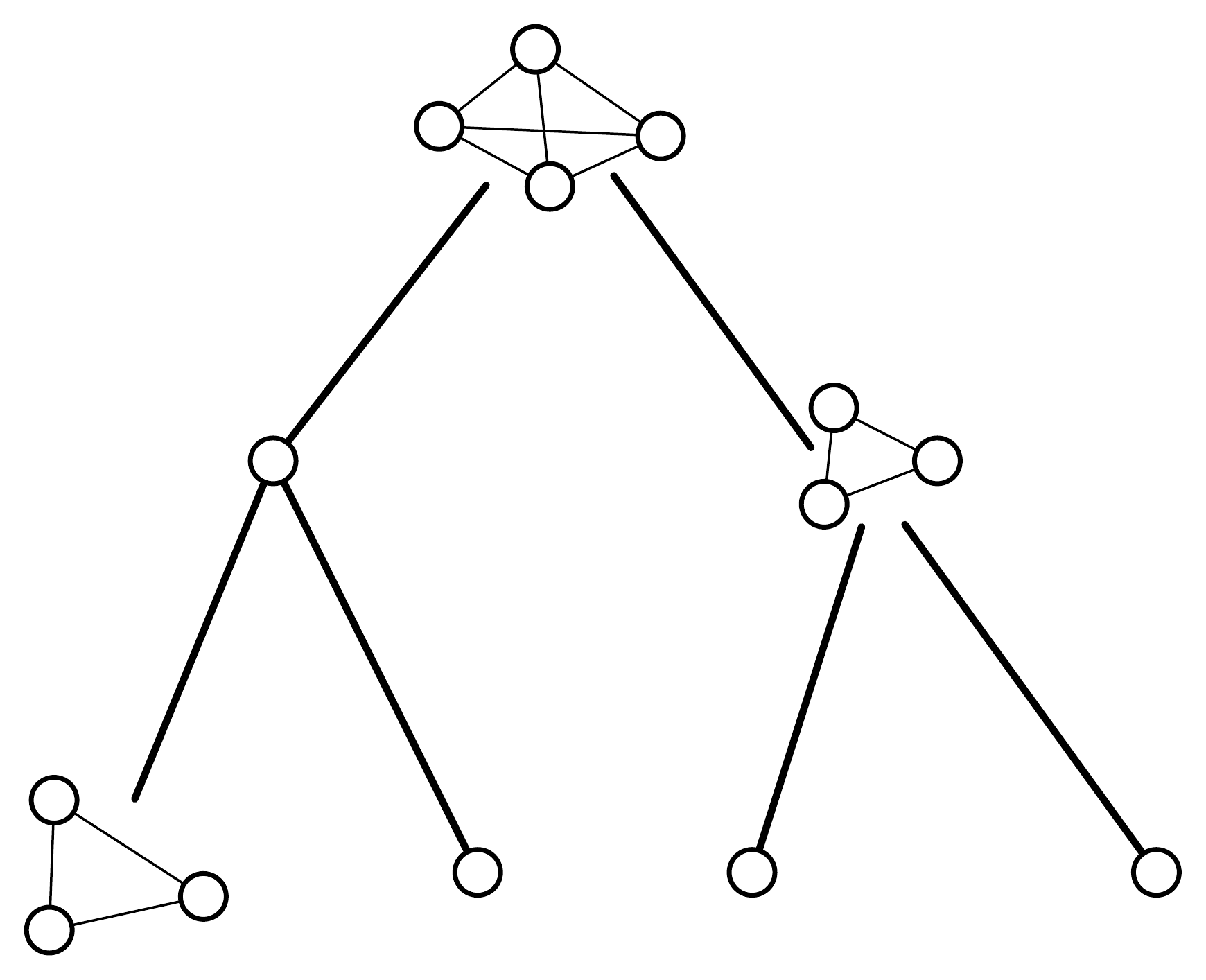}
  \vspace{-0.1in}
  \caption{An example of pseudo tree. The edge colors are omitted here for the sake of simplicity.} 
   \label{fig-ptree}
  \vspace{-0.05in}
\end{figure}

Pseudo-tree(G) returns a pseudo tree where each node is either a type $\rmnum{1}$ vertex or a clique of type $\rmnum{2}$  vertices (see Fig.~\ref{fig-ptree}). Similar to Case \Rmnum{2}, we take an arbitrary leaf-to-leaf path of $G$. If both of the two leaves are type $\rmnum{1}$ vertices, we can adjust the fractional numbers along the path as same as Case \Rmnum{2}, and update $G$ by removing some edges. Otherwise, we focus on the leaf that is a clique $\mathcal{C}$ of type $\rmnum{2}$ vertices. Note that $\mathcal{C}$ contains at least three vertices, and only one of them has an outward edge from the clique (because it is a leaf). Suppose that the two vertices having no outward edge are $v_1$ and $v_2$, and the corresponding two fractional numbers are $x^1_{*_1}$ and $x^2_{*_1}$ respectively. Let $\delta=\min\{x^1_{*_1}-\lfloor x^1_{*_1}\rfloor, \lceil x^2_{*_1}\rceil-x^2_{*1}\}$. Then at least one of 
\begin{eqnarray}
x^1_{*_1}-\delta \hspace{0.2in} and \hspace{0.2in} x^2_{*_1}+\delta
\end{eqnarray}
is an integer, and the other remains non-negative. Similar to Case \Rmnum{2}, we know that all the constraints of the SoL are not violated, and thus an update of $G$ follows. After finite times of such an adjustment, $G$ will become either Case \Rmnum{2} or a graph containing no edge (i.e., an integer solution of the SoL is obtained).

Finally, because the complexity of the initial $G$ is $O(poly(2^k))$, the whole adjustment costs $O(poly(2^k))$ time as well and is independent of $n$.
%
%
%
%
%
%
%
%
%
%
%
%
%
\end{proof}


\subsection{The $4$-Approximation Algorithm}

Combining Section~\ref{sec-candidate} \& \ref{sec-feasible}, we have Algorithm~\ref{alg-1}. Step 1 \& 2 cost $O(knd+nk\log (nk))$ time, and step 3 runs at most $O(k^k\log n)$ rounds where each round costs $O(n+poly(2^k))$ times. Thus, the total running time is $O(n(\log n +d))$ if $k$ is a constant.

\begin{algorithm}[tb]
   \caption{$4$-Approximation Algorithm}
   \label{alg-1}
\begin{algorithmic}
   \STATE {\bfseries Input:} $P=\{p_i, \mid 1\leq i\leq n\}\subset\mathbb{R}^d$, an integer $k\geq 1$, and integer lower and upper bounds $1\leq L\leq U\leq n$.
\begin{enumerate}
\item Run Gonazlez's algorithm and output $k$ points $S=\{s_1, s_2, \cdots, s_k\}$. 
\item Compute the $nk$ distances from $P$ to $S$, and sort them in an increasing order. The set of distances is denoted as $\mathcal{R}$. Initialize the optimal radius $r_{opt}=\max\mathcal{R}$.
\item For each $k$-tuple $(s'_1, \cdots, s'_k)$ from $S^k$, binary search on $\mathcal{R}$. For each step with $r\in \mathcal{R}$, do the following steps.
\begin{enumerate}
\item Draw the $k$ balls with radii $r$ and centered at $(s'_1, \cdots, s'_k)$ separately. 
\item If the SoL is feasible, 
\begin{itemize}
\item update $r_{opt}$ to be $r$  and record the feasible solution if $r<r_{opt}$; 
\item if $r$ is not a leaf, continue the binary search to the left side. Else, stop binary search.
\end{itemize}
\item Else, 
\begin{itemize}
\item if $r$ is not a leaf, continue the binary search to the right side. Else, stop binary search.
\end{itemize}
\end{enumerate}
\item Return the $k$-tuple from $S^k$ with the smallest $r_{opt}$ associating the corresponding feasible solution.
\end{enumerate}
\end{algorithmic}
\end{algorithm}

\begin{theorem}
\label{the-4app}
Algorithm~\ref{alg-1} yields a $4$-approximation of balanced $k$-center clustering, and the running time is $O(n(\log n+d))$ when $k$ is a constant. 
\end{theorem}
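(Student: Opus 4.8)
The plan is to obtain the theorem as a direct combination of Lemma~\ref{lem-4app}, Lemma~\ref{lem-integer}, and a single monotonicity observation that justifies the binary search; the time bound is then just an accounting of the three steps of Algorithm~\ref{alg-1}.

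For correctness I would first record that, for any fixed $k$-tuple of centers, the existence of a feasible balanced clustering is monotone in the radius: if radius $r$ works, then any $r'\ge r$ works too, since enlarging the balls only adds point--ball incidences and the old assignment stays valid. Hence along the sorted list $\mathcal{R}$ of the $kn$ candidate distances the predicate ``the SoL is feasible'' is monotone, so the binary search in Step~3 correctly returns the smallest feasible $r\in\mathcal{R}$ for that tuple (and, as observed in Section~\ref{sec-feasible}, the optimal radius of any such assignment problem is one of these $kn$ distances, so confining the search to $\mathcal{R}$ loses nothing). Next, Lemma~\ref{lem-4app} gives a tuple $\bar q\in S^k$ and an assignment of the optimal clusters $C_1,\dots,C_k$ to its coordinates in which every point is within distance $4r_{opt}$ of its assigned center; the sizes realized are exactly $|C_1|,\dots,|C_k|\in[L,U]$, so this is a feasible balanced clustering whose realized radius is an element of $\mathcal{R}$ not exceeding $4r_{opt}$. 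Therefore when Algorithm~\ref{alg-1} processes $\bar q$ it records a feasible radius $\le 4r_{opt}$, so the minimum value kept over all tuples of $S^k$ is at most $4r_{opt}$. Finally, the SoL only fixes \emph{how many} points of each region $\mathcal{R}_{(j_1,\dots,j_t)}$ are sent to each incident cluster; Lemma~\ref{lem-integer} converts the recorded fractional solution into an integer one, and then any assignment of the individual points consistent with these integer counts is legitimate, because every point of $\mathcal{R}_{(j_1,\dots,j_t)}$ lies within the chosen radius of each center $q_{j_l}$, $1\le l\le t$. This yields an actual partition of all $n$ points with every cluster size in $[L,U]$ and radius at most $4r_{opt}$. (A feasible output always exists, since at $r=\max\mathcal{R}$ every ball covers all of $P$ and the SoL reduces to splitting $n$ into $k$ parts of sizes in $[L,U]$, possible by the standing assumption $L\le\lfloor n/k\rfloor\le\lceil n/k\rceil\le U$.)

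For the running time, Step~1 is $k$ rounds of Gonazlez's algorithm, each scanning $n$ points of $\mathbb{R}^d$, hence $O(knd)$; Step~2 computes and sorts the $nk$ distances in $O(knd+nk\log(nk))$. In Step~3 there are $|S^k|=k^k$ tuples, and for each the binary search makes $O(\log(nk))$ probes; using the precomputed distances, a probe locates each point's region in $O(k)$ time, aggregates the counts $n_{(j_1,\dots,j_t)}$, builds the SoL (with $O(k2^k)$ variables and $O(2^k)$ constraints), and solves it by Gaussian elimination in $O(poly(2^k))$ time, so a probe costs $O(n+poly(2^k))$ and Step~3 costs $O\big(k^k\log(nk)\,(n+poly(2^k))\big)$. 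For constant $k$ the sum of all steps is $O(n(\log n+d))$.

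The main obstacle is essentially absent: the substance is already in Lemmas~\ref{lem-4app} and~\ref{lem-integer}. The only points that need to be stated explicitly are the monotonicity claim that makes the binary search correct and restricts the search to the $kn$ candidate distances, and the check that feasibility of the SoL (a statement about \emph{counts}) together with Lemma~\ref{lem-integer} really produces a feasible assignment of individual points, so that the $4$-approximation concerns genuine clusters and not merely their cardinalities.
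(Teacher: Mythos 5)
Your proposal is correct and follows essentially the same route as the paper, which obtains the theorem by combining Lemma~\ref{lem-4app} and Lemma~\ref{lem-integer} with the same step-by-step time accounting ($O(knd+nk\log(nk))$ for Steps 1--2 and $O(k^k\log n)$ probes of cost $O(n+poly(2^k))$ for Step 3). The monotonicity-of-feasibility remark and the check that integer region counts yield an actual point assignment are details the paper leaves implicit, and you supply them correctly.
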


\begin{cor}
\label{cor-4app}
Suppose the given instance locates in a metric space, and the time complexity for acquiring the distance between any two points is $O(D)$. Algorithm~\ref{alg-1} yields a $4$-approximation of balanced $k$-center clustering, and the running time is $O(n(\log n+D))$ when $k$ is a constant. 
\end{cor}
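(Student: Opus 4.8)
The plan is to verify that neither the correctness analysis nor the running-time analysis of Algorithm~\ref{alg-1} relies on any property of $\mathbb{R}^d$ other than the triangle inequality and the ability to evaluate pairwise distances. First I would revisit Lemma~\ref{lem-4app}: Gonazlez's algorithm is defined purely in terms of distances, so it runs verbatim in any metric space, and the inequalities (\ref{for-4app1}) and (\ref{for-4app2}) invoke only the triangle inequality; hence some $k$-tuple of $S^k$ still yields a $4$-approximation for balanced $k$-center clustering. Next I would revisit Section~\ref{sec-feasible} and Lemma~\ref{lem-integer}: the region decomposition $\mathcal{R}_{(j_1,\dots,j_t)}$ is determined solely by which of the $k$ metric balls contains a given point, the counts $n_{(j_1,\dots,j_t)}$ and the two constraint families (\ref{for-lp1})--(\ref{for-lp2}) depend only on these memberships, and the rounding argument (Cases I--III, the multigraph $G$, and the $\delta$-adjustments) is a purely combinatorial manipulation of the SoL that is oblivious to the underlying space. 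Therefore the feasibility test and the integrality conversion go through unchanged, and Algorithm~\ref{alg-1} outputs a feasible $4$-approximate balanced clustering.

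For the running time I would trace through the bound proved for Theorem~\ref{the-4app} and replace every occurrence of the $O(d)$ cost of a single distance evaluation by $O(D)$. Steps~1 and~2 perform $O(nk)$ distance computations plus a sort, costing $O(nkD + nk\log(nk))$. The key observation is that Step~3 reuses the $nk$ distances already computed in Step~2 to decide, for each candidate radius $r$, the region of each point, so no further distance queries are issued; each of the $O(k^k\log n)$ binary-search rounds still costs $O(n + \mathrm{poly}(2^k))$. Summing yields $O\!\left(nkD + nk\log(nk) + k^k\log n\,(n+\mathrm{poly}(2^k))\right)$, which collapses to $O(n(\log n + D))$ once $k$ is treated as a constant.

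The only point demanding care --- and the closest thing to an obstacle --- is confirming that every geometric-sounding step in Section~\ref{sec-feasible} is in fact combinatorial: in particular that the ``$2^k-1$ parts'' claim is used only as an upper bound on the number of nonempty membership patterns (which is at most $2^k-1$ in any metric), and that ``drawing the $k$ balls'' never requires their explicit description, only the point--ball membership relation derived from the precomputed distances. Once this is checked, the corollary is immediate from the proof of Theorem~\ref{the-4app} with $d$ replaced by $D$.
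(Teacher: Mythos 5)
Your proposal is correct and follows exactly the route the paper intends: the corollary is stated without proof precisely because every step of Lemma~\ref{lem-4app}, the SoL construction, and Lemma~\ref{lem-integer} uses only the triangle inequality and point--center distances, so the analysis of Theorem~\ref{the-4app} carries over with the per-distance cost $O(d)$ replaced by $O(D)$. Your observation that Step~3 only reuses the $nk$ distances precomputed in Step~2 (so each binary-search round stays $O(n+\mathrm{poly}(2^k))$) is exactly the point that makes the $O(n(\log n + D))$ bound go through.
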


\section{Other Issues}
Finally, we address two questions: (1) is the approximation ratio $4$ tight enough, and (2) why should we use $S^k$ rather than $S$ directly?

For the first question, we consider the following example. Let $n=6$ points locate on a line, $k=3$, and $L=U=2$. See Fig.~\ref{fig-4app}. It is easy to know that the optimal solution is $C_1=\{p_1, p_2\}$, $C_2=\{p_3, p_4\}$, and $C_3=\{p_5, p_6\}$ with $r_{opt}=1$. Suppose that the first point selected by Gonazlez's algorithm is $p_2$, then the induced $S=\{p_2, p_5, p_1\}$ which results in a $(4-\delta)$-approximation, no matter which $3$-tuple is chosen from $S^3$. Since $\delta$ can be arbitrarily small, the approximation ratio $4$ is tight.

\begin{figure}[h]
\vspace{-.2cm} 
\centering
 \includegraphics[height=1in]{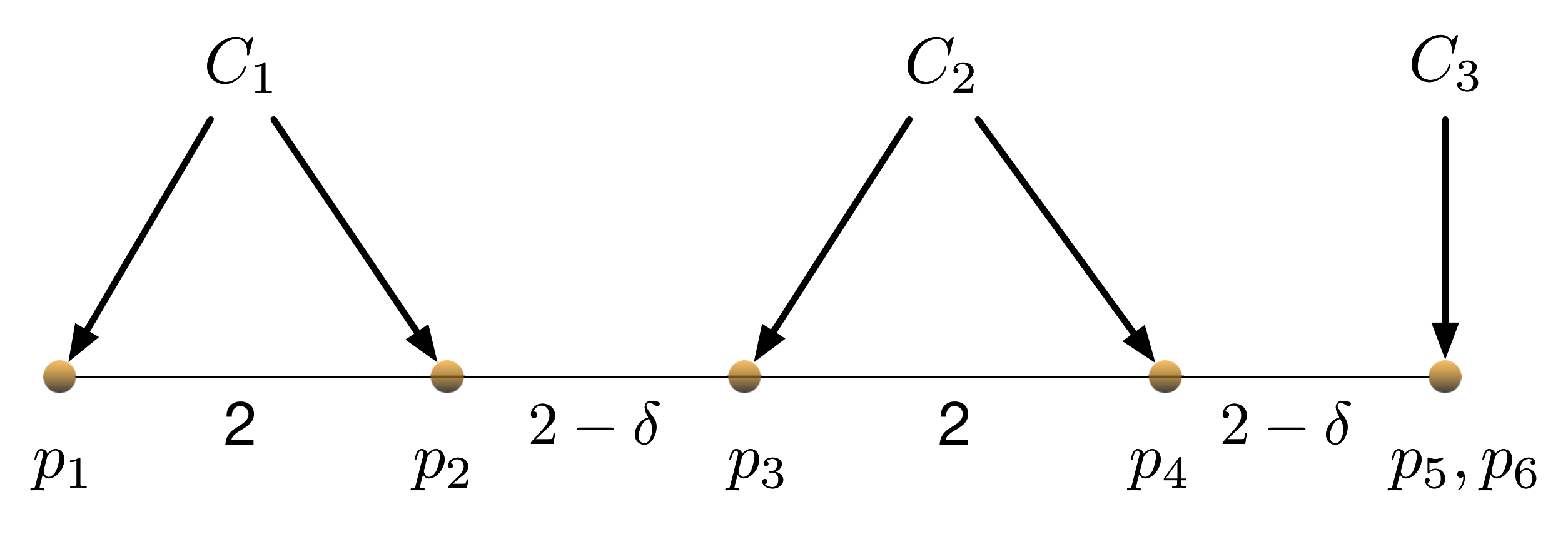}
  \vspace{-0.1in}
  \caption{$||p_1-p_2||=||p_3-p_4||=2$ and $||p_2-p_3||=||p_4-p_5||=2-\delta$ with a small positive $\delta$; $p_5$ and $p_6$ overlap.} 
   \label{fig-4app}
  \vspace{-0.15in}
\end{figure}

We construct another example to answer the second question. See Fig.~\ref{fig-tuple}. It is easy to know $r_{opt}=r$. Suppose that the first point selected by Gonazlez's algorithm is $p_1$, then the induced $S=\{p_1, p_5, p_6\}$. If we take these 3 points as the cluster centers, the obtained radius is at least $h$ (since $p_3$ and $p_4$ have to be assigned to $p_6$). Consequently, the approximation ratio is $h/r$ which can be arbitrarily large. Hence we need to search the $k$-tuple points from $S^k$ rather than $S$. 

\begin{figure}[h]
\vspace{-.2cm} 
\centering
 \includegraphics[height=2in]{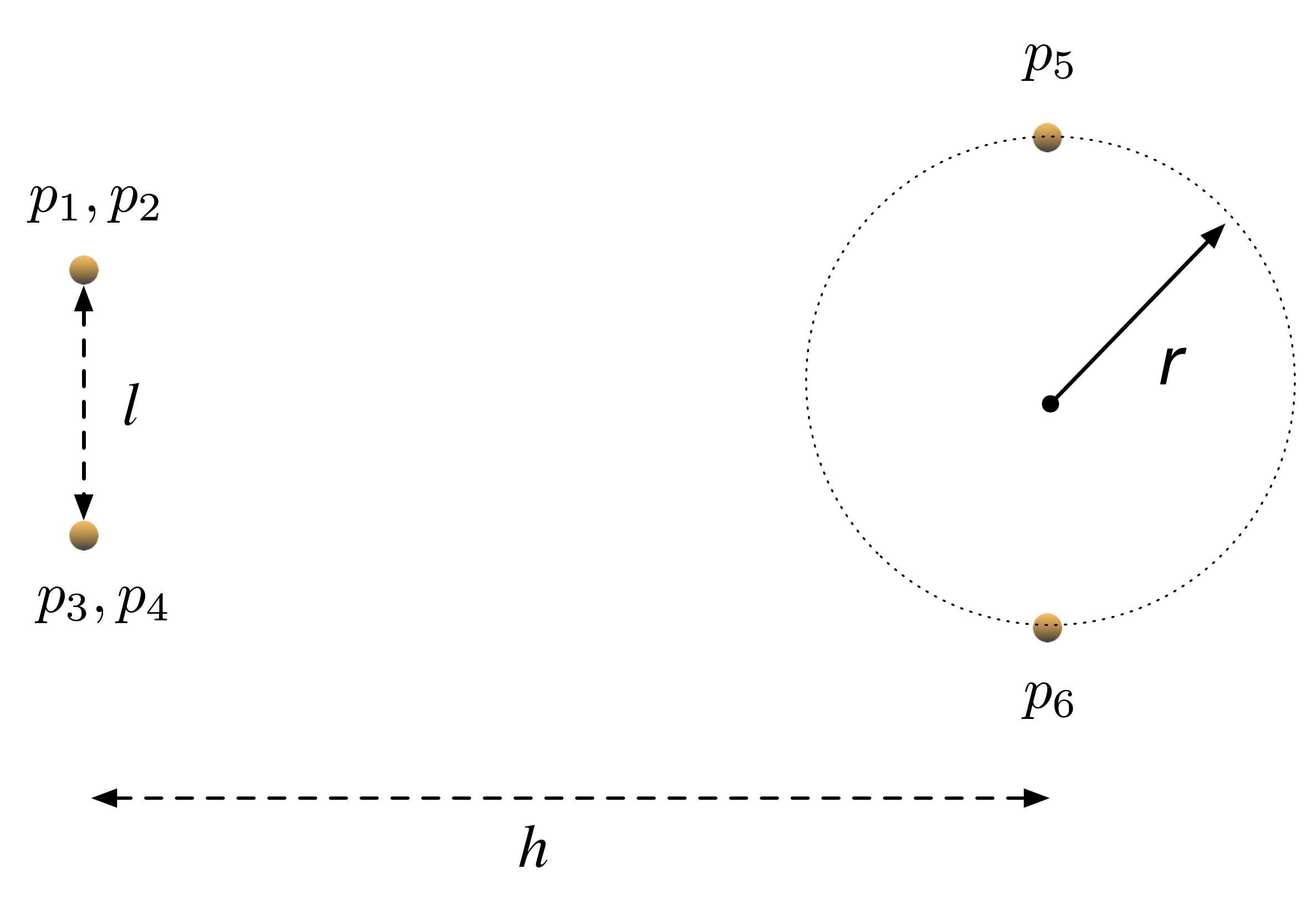}
  \vspace{-0.1in}
  \caption{Let the 6 points locate in a plane, $k=3$, and $L=U=2$. $p_1$ and $p_2$ overlap, $p_3$ and $p_4$ overlap, and these 4 points locate on the same vertical line while $p_5$ and $p_6$ locate on another vertical line; $||p_1-p_3||=l$, $||p_5-p_6||=2r$, and their horizontal distance is $h$; $l<2r\ll h$.} 
   \label{fig-tuple}
  \vspace{-0.15in}
\end{figure}

%
%
%
%
%
%
%
%
%
%
%
%
%
%
%
%
%
%


\small
\bibliographystyle{abbrv}


\end{document}